\documentclass{lmcs} %%% last changed 2014-08-20
\pdfoutput=1
\usepackage[utf8]{inputenc}

% LMCS Layouting Macros
\usepackage{lastpage}
\lmcsdoi{19}{4}{36}
\lmcsheading{}{\pageref{LastPage}}{}{}%
{Jun.~30,~2022}{Dec.~21,~2023}{}

%% mandatory lists of keywords
\keywords{Probabilistic finite automata, unary alphabet, emptiness problem, bounded ambiguity}

%% read in additional TeX-packages or personal macros here:
%% e.g. \usepackage{tikz}
\usepackage{hyperref}
\usepackage{amsmath,amsfonts,latexsym}
\usepackage[absolute,overlay]{textpos} 
\usepackage{graphicx}
\usepackage{float}
\usepackage{enumerate}
\usepackage{multirow}
\usepackage{multicol}
\usepackage{mathtools}
\usepackage{amssymb}
\usepackage{braket}
\usepackage{psfrag, epstopdf}
\usepackage{cite}
\usepackage{algorithm2e}
\usepackage[noend]{algorithmic}
\usepackage{tabu}
\usepackage{tikz}
\usepackage{caption}
\usetikzlibrary{automata, positioning, arrows}

%%\input{myMacros.tex}

  % arbitrary ring
\newcommand{\N}{{\mathbb{N}}}  % the natural numbers
\newcommand{\Z}{{\mathbb{Z}}}  % the integers
\newcommand{\Q}{{\mathbb{Q}}} % the rational numbers
\newcommand{\R}{{\mathbb{R}}}  % the real numbers
  % the complex numbers
  % the circle group
  % tthe empty word
\newcommand{\pfa}{{\mathcal{P}}}  % tthe empty word

\newcommand{\0}{\mathbf{0}} % Zero matrix
%% define non-standard environments BEYOND the ones already supplied
%% here, for example
%\theoremstyle{plain}\newtheorem{satz}[thm]{Satz} 
%\theoremstyle{plain}
%\newtheorem{thm}{Theorem}
%\newtheorem{lem}[thm]{Lemma}
%\newtheorem{prob}[thm]{Problem}
%\newtheorem{cor}[theorem]{Corollary}
%\newtheorem{prop}[thm]{Proposition}
%\crefname{satz}{Satz}{S\"atze}
%% Do NOT replace the proclamation environments lready provided by
%% your own.

\setlength{\algomargin}{-1em}

%% due to the dependence on amsart.cls, \begin{document} has to occur
%% BEFORE the title and author information:

\begin{document}

\title[Decision Questions for Probabilistic Automata on Small Alphabets]{Decision Questions for Probabilistic\texorpdfstring{\\}{ }Automata on Small Alphabets}
\titlecomment{{\lsuper*}This is an extended version of the conference paper \cite{BS21}.}
%\thanks{thanks, optional.}	%optional

% affiliations are numbered automatically with a, b, c (see below)
% use the optional argument to indicate the affiliation(s) of each author
% omit the argument if there is only one author, or only one affiliation
\author[P.~C.~Bell]{Paul C.~Bell\lmcsorcid{0000-0003-2620-635X}}[a]
\author[P. Semukhin]{Pavel Semukhin\lmcsorcid{0000-0002-7547-6391}}[b]

% affiliation 1 (automatically numbered a)
\address{Keele University, School of Computer Science and Mathematics, Colin Reeves Building, Keele, Staffordshire, ST5 5BG, UK}	%optional
% write emails for all authors having that affiliation
\email{p.c.bell@keele.ac.uk}  %optional

% affiliation 2 (automatically numbered b)
\address{Liverpool John Moores University, School of Computer Science and Mathematics, James Parsons Building, Byrom Street, Liverpool, L3 3AF, UK}	%optional
\email{P.Semukhin@ljmu.ac.uk}  %optional

%% etc.

%% required for running head on odd and even pages, use suitable
%% abbreviations in case of long titles and many authors:

%%%%%%%%%%%%%%%%%%%%%%%%%%%%%%%%%%%%%%%%%%%%%%%%%%%%%%%%%%%%%%%%%%%%%%%%%%%

%% the abstract has to PRECEDE the command \maketitle:
%% be sure not to issue the \maketitle command twice!

\begin{abstract}
  \noindent We study the emptiness and $\lambda$-reachability problems for unary and binary Probabilistic Finite Automata (PFA) and characterise the complexity of these problems in terms of the degree of ambiguity of the automaton and the size of its alphabet. Our main result is that emptiness and $\lambda$-reachability are solvable in EXPTIME for polynomially ambiguous unary PFA and if, in addition,  the transition matrix is over $\{0, 1\}$, we show they are in NP. In contrast to the Skolem-hardness of the $\lambda$-reachability and emptiness problems for exponentially ambiguous unary PFA, we  show that these problems are NP-hard even for finitely ambiguous unary PFA. We also show that the value of a polynomially ambiguous PFA can be computed in EXPTIME. For binary polynomially ambiguous PFA with commuting transition matrices, we prove NP-hardness of the $\lambda$-reachability (dimension 9), nonstrict emptiness (dimension 37) and strict emptiness (dimension 40) problems. 
\end{abstract}

\maketitle

\section{Introduction}
% General Intro
There are many possible extensions of the fundamental notion of a nondeterministic finite automaton. One such notion is that of  Probabilistic Finite Automata (PFA) which was first introduced by Rabin  \cite{Ra63}. In a PFA $\mathcal{P}$ over a (finite) input alphabet $\Sigma$ the outgoing transitions from a state, for each input letter of $\Sigma$, form a probability distribution, as does the initial state vector. Thus, a word $w \in \Sigma^*$ is accepted with a certain probability, which we denote $\mathcal{P}(w)$.

% Definition of problems
There are a variety of interesting questions that one may ask about a PFA $\mathcal{P}$ over an alphabet $\Sigma$. In this article we focus on two decision questions, that of \emph{$\lambda$-reachability} and \emph{emptiness}. The $\lambda$-reachability problem is stated thus: given a probability $\lambda \in [0, 1]$, does there exist some word $w \in \Sigma^*$ such that $\mathcal{P}(w) = \lambda$? In the (strict) emptiness problem, we ask if there exists \emph{any} word $w \in \Sigma^*$ such that $P(w) \geq \lambda$ (resp. $P(w) > \lambda$). We also mention the related \emph{cutpoint isolation} problem --- to determine if for each $\epsilon > 0$,  there exists a word $w \in \Sigma$ such that $|\mathcal{P}(w) - \lambda| < \epsilon$. The \emph{value}-$1$ problem is a special case of the cutpoint isolation problem when $\lambda = 1$ \cite{GO10} and the \emph{value} problem is to determine the supremum over all words of the acceptance probability of a word.

% Known results pre ambiguity
In general, the emptiness problem is undecidable for PFA \cite{Paz}, even over a binary alphabet when the automaton has $25$ states \cite{Hir}. The cutpoint isolation problem is undecidable \cite{BMT77} even for PFA with $420$ states over a binary alphabet \cite{BC03}. %The cutpoint isolation problem, in the special case where $\lambda = 1$ (the value-$1$ problem), is also undecidable \cite{GO10}. 
The problem is especially interesting given the seminal result of Rabin that if a cutpoint $\lambda$ is isolated, then the cutpoint language associated with $\lambda$ is necessarily regular \cite{Ra63}. 

% Definition of ambiguity
We may ask which restrictions of PFA may lead to decidability of the previous problems. In this paper we are interested in PFA of \emph{bounded ambiguity}, where the ambiguity of a word denotes the number of accepting runs of that word in the PFA. A PFA  $\mathcal{P}$ is $f$-ambiguous,
for a function $f : \mathbb{N} \to \mathbb{N}$, if every word of length $n$ has at most $f(n)$ accepting
runs. A run is  \emph{accepting} if the probability of that run ending in a final state is strictly positive. The degree of ambiguity is thus a property of the NFA underlying a PFA (i.e.,~the NFA produced by  setting all nonzero transition probabilities to $1$). We may consider the notions of finite, polynomial or exponential ambiguity of $\mathcal{P}$ based on whether $f$ is bounded by a constant, is a polynomial or else is exponential, respectively. Characterisations of the degree of ambiguity of NFA are given by Weber and Seidel \cite{WS91} and we discuss more about this in Section~\ref{amb-sec}. 

% Background on results of PFA ambiguity
The authors of \cite{FR17} show that emptiness for PFA remains undecidable even for polynomially ambiguous automata (quadratic ambiguity), show PSPACE-hardness results for finitely ambiguous PFA and that emptiness is in NP for the class of $k$-ambiguous PFA for every $k > 0$. The emptiness problem for PFA was later shown to be undecidable for linearly ambiguous automata \cite{DJ18}. The emptiness problem is known to be in quasi-polynomial time for $2$-ambiguous PFA \cite{FR17}. Furthermore, the value of a $k$-ambiguous PFA is approximable in polynomial time up to any multiplicative constant \cite{FR17}.

Another restriction is to constrain input words of the PFA to come from a given language~$\mathcal{L}$. If $\mathcal{L}$ is a \emph{letter-bounded} language\footnote{A language \(\mathcal{L}\) over an alphabet \(\Sigma = \{a_1, \ldots, a_k\}\) is \emph{letter-bounded} if \(\mathcal{L} \subseteq a_{j_1}^* \cdots a_{j_p}^*\), where \(1 \leq j_i \leq k\) for \(1 \leq i \leq p\).}, then the emptiness and $\lambda$-reachability problems remain undecidable for polynomially ambiguous PFA, even when all transition matrices commute \cite{Bell19}. In contrast, the \emph{cutpoint-isolation} problem is decidable even for exponentially ambiguous PFA when inputs are constrained to come from a given letter-bounded context-free language, although it is NP-hard for $3$-state PFA on letter-bounded inputs \cite{BS20}.

Our main results are as follows. We show that the $\lambda$-reachability and emptiness problems for probabilistic finite automata are:
\begin{itemize}
    \item In EXPTIME for the class of polynomially ambiguous unary PFA and are NP-complete if, in addition, the transition matrix is over $\{0,1\}$ [Theorem \ref{mainthm} and Corollary \ref{cor:NPcmpl}].
    \item NP-hard for polynomially ambiguous PFA over a binary alphabet with \emph{fixed} and \emph{commuting} transition matrices of dimension $40$ (strict emptiness problem), $37$ (nonstrict emptiness problem) and $9$ ($\lambda$-reachability problem) [Theorem \ref{thm:NPbin}].
\end{itemize}

We also show NP-hardness for the class of finitely ambiguous unary PFA with $\{0,1\}$ transition matrix [Proposition~\ref{thm:NPhard}]. Our hardness results rely on the NP-hardness of solving binary quadratic equations and the universality problem for unary regular expressions. Note that these restrictions of the PFA, to have polynomial ambiguity, a binary alphabet, fixed and commuting transition matrices, or else finite ambiguity, unary alphabet and binary transition matrix, makes it more difficult to prove an NP-hard lower-bound. The aim is to have as restricted a model as possible, while retaining NP-hardness.     

An interesting question, that is left open, is to find the exact  complexity of these problems in the case of polynomially ambiguous unary PFA, i.e.\ to close the gap between the EXPTIME upper bound and NP-hard lower bound.

There are connections between problems on polynomially ambiguous unary PFA and reachability problems for special types of linear recurrence sequences (LRS).
For example, polynomially ambiguous unary PFA (or more generally, polynomially ambiguous unary weighted automata) describe LRS whose characteristic roots are roots of rational numbers \cite{BFL22}. It was recently shown that for LRS whose characteristic roots are roots of real algebraic numbers, the reachability problem (also called the Skolem problem) can be solved in \(\textrm{NP}^{\textrm{RP}}\) \cite{ABM20}. However, the translation given in \cite{BFL22} can produce a description for an LRS that is exponential in the size of the PFA. On the other hand, it is claimed in \cite{ABM20} that the reachability problem for PFA is in \(\textrm{NP}^{\textrm{RP}}\) for the special case where roots are distinct roots of real numbers. Our results do not have such restrictions.

Our proof of the EXPTIME upper bound relies on a careful analysis of how fast the acceptance probability $\mathcal{P}(a^k)$ converges to its limit values as $k\to \infty$. To prove the NP bound when the transition matrix is over $\{0,1\}$, we consider a nondeterministic version of the EXPTIME algorithm, showing that the verification part can be done in polynomial time. We also show that the value of a polynomially ambiguous PFA can be computed in EXPTIME as a corollary of our techniques.

This is an extended version of the conference paper \cite{BS21}. The present version contains an expanded introduction, additional figures and explanations and full proofs which were in the appendix of the submitted conference version. We also consider the problem of computing the value of a PFA which was not considered in \cite{BS21}. 

\section{Probabilistic Finite Automata and Notation}

We denote by $\mathbb{Q}^{n \times n}$ the set of all $n \times n$ matrices over  $\mathbb{Q}$. Given two column vectors $u \in \mathbb{Q}^n$ and $v \in \mathbb{Q}^m$, we denote by $[u|v]$ the column vector $(u_1, \ldots, u_n, v_1, \ldots, v_m)^T \in \mathbb{Q}^{n+m}$. For a sequence of vectors $u_1, u_2, \ldots, u_k$, we write $[u_1 | u_2 | \cdots | u_k]$ for the column vector which stacks the vectors on top of each other. 

Given $A = (a_{ij}) \in\Q^{m\times m}$ and $B\in\Q^{n\times n},$ we define the direct sum $A\oplus B$ and Kronecker product $A \otimes B$ of $A$ and $B$ by:
\[
A\oplus B=
\left[\begin{array}{@{}c|l@{}}
A & \0_{m,n}\\
\hline
\0_{n,m} & B
\end{array}\right], \quad
A\otimes B=
\left[\begin{array}{cccc}
a_{11}B & a_{12}B & \cdots & a_{1m}B \\ 
a_{21}B & a_{22}B & \cdots & a_{2m}B \\ 
\vdots & \vdots & \ddots &\vdots \\ 
a_{m1}B & a_{m2}B & \cdots & a_{mm}B \\ 
\end{array}\right],
\]
where $\0_{i,j}$ denotes the zero matrix of dimension $i \times j$. Note that neither $\oplus$ nor $\otimes$ are commutative in general. The following useful properties of $\oplus$ and $\otimes$ are well known. 

\begin{lem}\label{kronprop}
Let $A, B, C, D \in \mathbb{Q}^{n \times n}$. Then we have:
\begin{itemize}
\item Associativity: $(A \otimes B) \otimes C = A \otimes (B \otimes C)$ and $(A \oplus B) \oplus C = A \oplus (B \oplus C)$, thus $A \otimes B \otimes C$ and $A \oplus B \oplus C$ are unambiguous.
\item Mixed product properties: $(A \otimes B)(C \otimes D) = (AC \otimes BD)$ and $(A \oplus B)(C \oplus D) = (AC \oplus BD)$.
\item If $A$ and $B$ are stochastic matrices, then so are $A \oplus B$ and $A \otimes B$. 
\item If $A, B \in \mathbb{Q}^{n \times n}$ are both upper-triangular, then  so are $A \oplus B$ and $A \otimes B$.
\end{itemize}
\end{lem}

See \cite{HJ91} for proofs of the first three properties of Lemma~\ref{kronprop}. The fourth property follows directly from the definition of the direct sum and Kronecker product.

A Probabilistic Finite Automaton (PFA) $\mathcal{P}$ with $n$ states over an alphabet $\Sigma$ is defined as $\mathcal{P}=(u, \{M_a|a \in \Sigma\}, v)$ where $u \in \mathbb{Q}^n$ is the \emph{initial probability distribution}; $v \in \{0, 1\}^n$ is the \emph{final state vector} and each $M_a \in \mathbb{Q}^{n \times n}$ is a (row) stochastic matrix. We will primarily be interested in \emph{unary} and \emph{binary} PFA, for which $|\Sigma| = 1$ and $|\Sigma| = 2$ respectively. For a word $w = a_1a_2\cdots a_k \in \Sigma^*$, we define the acceptance probability $\pfa(w): \Sigma^* \to \mathbb{Q}$ of $\pfa$ as:
\[
\pfa(w) = u^T M_{a_1}M_{a_{2}} \cdots M_{a_k} v \in [0, 1],
\]
which denotes the acceptance probability of $w$.\footnote{Some authors interchange the order of ${u}$ and ${v}$ and use column stochastic matrices, although the two definitions are trivially equivalent.} 

For a given cutpoint $\lambda \in [0, 1]$, we define the following languages: $L_{\geq \lambda}(\pfa) = \{w \in \Sigma^*\ |\ \pfa(w) \geq \lambda\}$, a nonstrict cutpoint language, and $L_{> \lambda}(\pfa) = \{w \in \Sigma^*\ |\ \pfa(w) > \lambda\}$, a strict cutpoint language. The (strict) emptiness problem for a cutpoint language is to determine if $L_{\geq \lambda}(\pfa) = \emptyset$ (resp. $L_{> \lambda}(\pfa) = \emptyset$). We are also interested in the \emph{$\lambda$-reachability} problem, for which we ask if there exists a word $w \in \Sigma^*$ such that $\mathcal{P}(w) = \lambda$. 

\subsection{PFA Ambiguity}\label{amb-sec}

The degree of ambiguity of an automaton is a structural parameter, indicating the number of accepting runs for a given input word. See \cite{WS91} for details of ambiguity for nondeterministic automata and \cite{Bell19, BS20, DJ18, FR17} for connections to PFA. 

Let $w \in \Sigma^*$ be an input word of an NFA $\mathcal{N} = (Q, \Sigma, \delta, Q_I, Q_F)$, with $Q$ the set of states, $\Sigma$ the input alphabet, $\delta \subset Q \times \Sigma \times Q$ the transition function, $Q_I$ the set of initial states and $Q_F$ the set of final states. For each $(p, w, q) \in Q \times \Sigma^* \times Q$, define $\textrm{da}_{\mathcal{N}}(p, w, q)$ as the number of paths for $w$ in $\mathcal{N}$ leading from state $p$ to $q$. The \emph{degree of ambiguity} of $w$ in $\mathcal{N}$, denoted $\textrm{da}_{\mathcal{N}}(w)$,  is defined as the number of all \emph{accepting paths} for $w$ (starting from an initial and ending in a final state). The \emph{degree of ambiguity} of $\mathcal{N}$, denoted $\textrm{da}(\mathcal{N})$, is the supremum of the set $\{\textrm{da}_{\mathcal{N}}(w)\ |\ w \in \Sigma^*\}$. $\mathcal{N}$ is called infinitely ambiguous if $\textrm{da}(\mathcal{N}) = \infty$, finitely ambiguous if $\textrm{da}(\mathcal{N}) < \infty$, and unambiguous if $\textrm{da}(\mathcal{N}) \leq 1$. The \emph{degree of growth} of the ambiguity of $\mathcal{N}$, denoted $\textrm{deg}(\mathcal{N})$, is defined as the minimum degree of a univariate polynomial $h$ with positive integral coefficients such that for all $w \in \Sigma^*$, $\textrm{da}_{\mathcal{N}}(w) \leq h(|w|)$ (if such a polynomial exists, in which case $\mathcal{N}$ is called polynomially ambiguous, otherwise the degree of growth is infinite and $\mathcal{N}$ which we call exponentially ambiguous). 

The above notions relate to NFA. We may derive an analogous notion of ambiguity for PFA by considering an embedding of a PFA $\pfa$ into an NFA $\mathcal{N}$ in such a way that for each letter $a \in \Sigma$, if the probability of transitioning from a state $i$ to state $j$ is nonzero under $\pfa$, then there is an edge from state $i$ to $j$ under $\mathcal{N}$ for letter $a$. The initial states of $\mathcal{N}$ are those of $\pfa$ having nonzero initial probability and the final states of $\mathcal{N}$ and $\pfa$ coincide. We then say that $\pfa$ is finitely/polynomially/exponentially ambiguous if $\mathcal{N}$ is (respectively). 

A state $q \in Q$ in an NFA (resp. PFA) is called \emph{useful} if there exists an accepting path which visits $q$ (resp. an accepting path of nonzero probability which visits $q$). We can characterise whether an NFA $\mathcal{A}$ (and thus a PFA by the above embedding) has finite, polynomial or exponential ambiguity using the following properties (see Figs~1 and 2): 

\noindent {\bf EDA} - There is a useful state $q \in Q$ such that, for some word $v \in \Sigma^*$, $da_{\mathcal{A}}(q, v, q) \geq 2$.

\noindent {\bf $\text{IDA}_d$} - There are useful states $r_1, s_1, \ldots, r_d, s_d \in Q$ and words $v_1, u_2, v_2, \ldots, u_d, v_d \in \Sigma^*$ such that for all $1 \leq i \leq d$, $r_i$ and $s_i$ are distinct and $(r_i, v_i, r_i), (r_i, v_i, s_i), (s_i, v_i, s_i) \in \delta$ and for all $2 \leq i \leq d$, $(s_{i-1}, u_{i}, r_{i}) \in \delta$. 

\noindent\begin{minipage}{\textwidth}
\begin{minipage}[c][6cm][c]{0.38\textwidth}
  \begin{tikzpicture}[shorten >=1pt,node distance=1.45cm,on grid,auto] 
   \node[state] (q)   {$q$}; 
   \node[state, initial, initial text = {}] (q_0) [left=of q] {$q_0$}; 
   \node[state, accepting](q_F) [right=of q] {$q_F$};
    \path[->] 
    (q_0) 		edge [above, dashed] node {$w_1$} (q)
    (q)  edge [loop below] node {$v$} ()
    (q)  edge [loop above] node {$v$} ()
edge [above, dashed] node {$w_2$} (q_F);
           %(q_2) edge [loop below] node {$\{0,1\}:1$} ();
    %(q_2) edge  node [swap] {0} (q_3) 
     %     edge [loop below] node {1} ();
\end{tikzpicture}
\label{fig_eda}

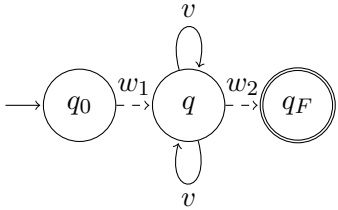
\captionof{figure}{EDA Property}
%\caption{EDA Property}
\end{minipage}\hfill
\begin{minipage}[c][6cm][c]{0.62\textwidth}
\begin{tikzpicture}[shorten >=1pt,node distance=1.4cm,on grid,auto] 
   \node[state] (p_1)   {$r_1$};
   \node[state] (p_2) [right=of p_1]  {$s_1$};
   \node[] (p_2b) [right=of p_2]  {};
   \node[] (p_2c) [right=of p_2b]  {};
   \node[state] (p_3) [right=of p_2c]  {$r_d$};
   \node[state] (p_4) [right=of p_3]  {$s_d$};
   \node[state, initial, initial text = {}] (q_0) [below=of p_1] {$q_0$}; 
   \node[state, accepting](q_F) [below=of p_4] {$q_F$};
    \path[->] 
    (p_1)  edge [loop above] node {$v_1$} ()
    (p_2)  edge [loop above] node {$v_1$} ()
    (p_3)  edge [loop above] node {$v_d$} ()
    (p_4)  edge [loop above] node {$v_d$} ()
    (p_4)  edge [left] node {$w_2$} (q_F)
    (p_1)  edge [above] node {$v_1$} (p_2)
    (p_3)  edge [above] node {$v_d$} (p_4)
    (p_2c)  edge [above] node {$u_d$} (p_3)
    (p_2)  edge [above] node {$u_2$} (p_2b)
    (p_2b)  edge [above, dashed] node {} (p_2c)
    (q_0)  edge [right] node {$w_1$} (p_1);
%edge [below] node {$a^y$} (q_F);
           %(q_2) edge [loop below] node {$\{0,1\}:1$} ();
    %(q_2) edge  node [swap] {0} (q_3) 
     %     edge [loop below] node {1} ();
\end{tikzpicture}
\label{fig_idad}
\captionof{figure}{IDA$_d$}
\end{minipage}%
\end{minipage}

\begin{thmC}[\cite{IR86, Re77, WS91}]\label{crtitthm}
An NFA (or PFA) $\mathcal{A}$ having the EDA property is equivalent to it being exponentially ambiguous. For any $d \in \mathbb{N}$, an NFA (or PFA) $\mathcal{A}$ having property $\text{IDA}_d$ is equivalent to $\textrm{deg}(\mathcal{A}) \geq d$.
\end{thmC}

Clearly, if $\mathcal{N}$ agrees with $\text{IDA}_d$ for some $d > 0$, then it also agrees with $\text{IDA}_1, \ldots, \text{IDA}_{d-1}$. An NFA (or PFA) is thus finitely ambiguous if it does not possess property $\text{IDA}_1$.

\section{Unary PFA}

Our main focus is on unary automata. We begin by giving a simple folklore proof that the $\lambda$-reachability and emptiness problems are as computationally difficult as the famous Skolem problem, which is only known to be decidable for instances of depth $4$ \cite{Vereshchagin1985}. See also \cite{AAOW15} for connections to reachability problems for Markov chains. 

\begin{thm}\label{skolRes}
The $\lambda$-reachability and emptiness problems for unary exponentially ambiguous Probabilistic Finite Automata are Skolem-hard.
\end{thm}

\let\oldproofname\proofname\renewcommand{\proofname}{\textit{Proof (Folklore).}}
\begin{proof} %\emph{(Folklore).}
The $\lambda$-reachability problem for unary exponentially ambiguous PFA can be shown Skolem-hard  based on the matrix formulation of Skolem's problem \cite{HHHK05} and Turakainen's technique showing the equivalence of (strict) cutpoint language acceptance of generalised automata and exponentially ambiguous probabilistic automata \cite{Tu69}. We shall now show the details of this procedure. 

Consider the following linear recurrence sequence (LRS) $u_n$ for $n \geq 0$, with recurrence coefficients $a_0, \ldots, a_{k-1} \in \Z$ and where $u_0, \ldots, u_{k-1} \in \Z$ are the given initial conditions:
\[
u_n = a_{k-1}u_{n-1} + \cdots + a_0u_{n-k}
\]
We denote the zero set of $u_n$ by $Z(u_n) = \{j | u_j = 0\}$. Skolem's problem is to determine if $Z(u_n)$ is empty for a given LRS. The value $k$ is the \emph{depth} of the LRS. It is decidable to determine if $Z(u_n)$ is infinite \cite{Hansel}.

Let $M \in \Z^{k \times k}$ be defined thus:
\[
M = \left( \begin{array}{ccccc} a_{k-1} & 1 & \cdots & 0 & 0 \\ \vdots & \vdots & \ddots & \vdots & \vdots \\ a_2 & 0 & \cdots & 1 & 0 \\ a_1 & 0 & \cdots & 0 & 1 \\ a_0 & 0 & \cdots & 0 & 0 \end{array}\right),
\]
and let $u' = (u_{k-1}, \ldots, u_0)^T \in \Z^{k}$ and $v' = (0, \ldots, 0, 1) \in \Z^k$. It is not difficult to verify that $u_j = u'^T M^j v'$.

We may now define:
\[
A' = \left( \begin{array}{ccc} 0 & {\bf 0} & 0 \\ {\bf s}^T & M & {\bf 0}^T \\ r & {\bf t} & 0 \end{array}\right) \in \Z^{(k+2) \times (k+2)},
\]
where $r \in \mathbb{Z}$, ${\bf s}, {\bf t} \in \Z^{k}$ are (uniquely) chosen so that each row and column sum is zero and ${\bf 0} \in \Z^k$ is the zero vector. Clearly then $A'^j$ also has zero row and column sums and retains this same structure for any $j > 0$. Define $\Omega \in \Z^{(k+2) \times (k+2)}$ such that $\Omega_{i, \ell} = 1$ for $1 \leq i, \ell \leq k+2$. Notice that $\Omega^j = (k+2)^{j-1} \Omega$ and $A'\Omega = \Omega A' = {\bf 0}$. Let $c = \max\{|a_\ell| | 1 \leq \ell \leq k-1\}+1$ and note that $A'+c\Omega$ is strictly positive. Define $A = \frac{1}{c(k+2)}(A'+c\Omega)$, $u = \frac{[0 | u' | 0]}{|u'|_1} \in \Q^{k+2}$ and $v = [0|v'|0] \in \Z^{k+2}$, noting that $|u|_1$ = 1. Let our PFA be given by $\pfa = (u, A, v)$.

It can now be seen that:
\begin{eqnarray*}
\pfa(a^j) & = & u^TA^j v \\
 & = & \left(\frac{1}{c(k+2)}\right)^j u^T(A'+c\Omega)^jv \\
 & = & \left(\frac{1}{c(k+2)}\right)^j u^T(A'^j+c^j\Omega^j)v \\
 & = & \left(\frac{1}{c(k+2)}\right)^j(u'^TM^jv' + c^j(k+2)^{j-1}u^T\Omega v) \\
 & = & \left(\frac{1}{c(k+2)}\right)^j u_j + \frac{1}{(k+2)},
\end{eqnarray*}
which equals $\frac{1}{(k+2)}$ if and only if $u_j = 0$. Thus determining if cutpoint $\lambda = \frac{1}{(k+2)}$ can be reached is Skolem-hard.

The emptiness problem can be shown Skolem-hard by encoding the positivity problem which is known to be Skolem-hard, see \cite{OW14} for example. Essentially this stems from the fact that $u_j^2$ is also an LRS and clearly $u_j^2$ is nonnegative. Thus, encoding $u_j^2$ in the same way as above means that $\pfa(a^j) \geq \frac{1}{(k+2)}$ with equality if and only if $u_j = 0$ as required.

Note that stochastic matrix $A$ is necessarily \emph{exponentially ambiguous} so long as one of the coefficients $a_0, \ldots, a_{k-1}$ is negative. This follows since $A$ is then strictly positive by construction. The degree of ambiguity of word $a^j$ is thus $(k+2)^j$. An interesting construction is shown in \cite{OW12}, where a technique to generate a stochastic matrix of dimension $2k+1$ is given (rather than dimension $k+2$ shown above). The resulting PFA is still exponentially ambiguous, but with lower ambiguity. 
\end{proof} 
\renewcommand{\proofname}{\oldproofname}

We now move to prove our main result, specifically that the emptiness and $\lambda$-reachability problems for polynomially ambiguous unary probabilistic finite automata are in EXPTIME. Note again that without the restriction of polynomial ambiguity the problem is Skolem-hard by Theorem~\ref{skolRes} and thus not even known to be decidable.

\begin{thm}\label{mainthm}
The $\lambda$-reachability and (strict) emptiness problems for unary polynomially ambiguous Probabilistic Finite Automata are decidable in EXPTIME. 
\end{thm}

In order to establish Theorem~\ref{mainthm}, we need to prove a series of lemmas.

The next lemma states that we may consider a unary polynomially ambiguous PFA whose transition matrix is upper-triangular. This will prove useful since in that case the eigenvalues of the transition matrix are rational nonnegative. In general, a polynomially ambiguous unary PFA may have a transition matrix with complex eigenvalues as illustrated here: 
\[
A = \begin{pmatrix}
0 & 1 & 0 \\ 0 & 0 & 1 \\ 1 & 0 & 0
\end{pmatrix},
\]
whose eigenvalues are  $\left\{1, -\frac{1}{2} \pm {\bf i}{\frac{\sqrt 3}{2}}\right\}$. The proof of the lemma relies on the analysis of strongly connected components (SCCs) of the underlying transition graph of a PFA.

The following lemma is folklore although we could not find a definite reference which proved the statement and therefore we include a proof for completeness.

\begin{lem}\label{lem:split}
Let $\mathcal{P} = (u, A, v)$ be a polynomially ambiguous unary Probabilistic Finite Automaton with acceptance function $\mathcal{P}(a^k) = u^TA^kv$. Then we can compute in EXPTIME a set of $d$ polynomially ambiguous unary PFAs $\{\mathcal{P}_{s} = (u_s, U, v')\ |\ 0 \leq s \leq d-1\}$ such that $U$ is rational upper-triangular and
$\mathcal{P}(a^k) = \mathcal{P}_{s}(a^{r}) = u_{s}^TU^rv'$, where $k = rd + s$ with $0\leq s \leq d-1$.
\end{lem}

\begin{proof}
We will identify $\mathcal{P}$ and its underlying graph in which an edge $(p,q)$ exists if{}f $A_{p,q}\neq 0$. Two states $p, q$ of a PFA are said to be \emph{connected} if there exists a path from $p$ to $q$ and from $q$ to $p$. We partition the set of states into Strongly Connected Components (SCC) denoted  $S_1, S_2, \ldots, S_{\ell}$ so that for any SCC $S_j$, either $|S_j| = 1$, or else any two states in $S_j$ are connected. These SCCs can be computed in linear time.

A polynomially ambiguous PFA does not have the EDA property (see Sec.~\ref{amb-sec}). This implies that every $S_j$, with $|S_j| > 1$, consists of a single directed cycle, possibly with transitions to other SCCs. To see this, suppose there are two different directed cycles inside $S_j$ of lengths $m$ and $n$ and a common vertex $p$. Then one can construct two different paths of length $mn$ from $p$ to $p$ by going $m$ times along the first cycle and $n$ time along the second cycle, respectively, contradicting the assumption that $\pfa$ does not have the EDA property.

Note that if there exists a path from a state $p \in S_{j_1}$ to some $q \in S_{j_2}$, then there does not exist any path from any state in $S_{j_2}$ to a state in $S_{j_1}$, otherwise $S_{j_1}$ and $S_{j_2}$ would merge to a single SCC (since all vertices are then connected).
This implies that the connected components $S_1, S_2, \ldots, S_{\ell}$ can be reordered in such a way that there are no transitions from $S_j$ to $S_i$ for $i<j$. Hence there exists a permutation matrix $P$ such that the following matrix is stochastic  block upper-triangular:
\[
B = PAP^{-1} = \begin{pmatrix} B_1 & * & \cdots & * \\ 0 & B_2 & \ddots & * \\ \vdots & \ddots & \ddots & \vdots \\ 0 & 0 & \cdots & B_\ell\end{pmatrix},
\]
such that each $B_j \in \Q^{d_j \times d_j}$, where $d_j$ is the size of $S_j$, and $B_j \preceq P_j$, where $P_j \in \N^{d_j \times d_j}$ is a permutation matrix, and the entries $*$ are arbitrary. Here $M\preceq N$ means that $M$ is entrywise less than or equal to $N$, i.e.\ $M_{i,j} \leq N_{i,j}$. 

Let $d = \textrm{lcm}\{d_j\ |\ 1 \leq j \leq \ell\}$ (in fact, we can simply take $d=\prod_{j=1}^\ell d_j$). We then see that:
\[
U := B^d = PA^dP^{-1} = \begin{pmatrix} B_1^d & * & \cdots & * \\ 0 & B_2^d & \ddots & * \\ \vdots & \ddots & \ddots & \vdots \\ 0 & 0 & \cdots & B_\ell^d\end{pmatrix}.
\]
Note that each $B_j^d \preceq P_j^d = I_j$, where $I_j \in \N^{d_j \times d_j}$ is the identity matrix, and the entries~$*$ are arbitrary. Therefore, each $B_j^d$ is diagonal, and so $U$ is clearly upper-triangular. 

We then define $\mathcal{P}_{s} = (u_{s}, U, v')$, for $0 \leq s \leq d-1$, with $u_{s}^T = u^T A^{s} P^{-1}$ and $v' = Pv$ noting that $Pv$ is a binary vector as required of a final state vector. We now see that:
\[
\mathcal{P}(a^k) = u^TA^kv = u^TA^sA^{rd}v = u^TA^sP^{-1}(PA^{rd}P^{-1})Pv  = u_{s}^TU^rv' = \mathcal{P}_{s}(a^r) 
\]
for $k = rd + s$ with $0\leq s \leq d-1$ as required. Here we used the identity $U^r = PA^{rd}P^{-1}$.

Finally, note that $d$ can be exponential in the number of states of $\mathcal{P}$, which in turn is bounded by the input size. Hence computing $U$ and all $u_{s}$, for $0 \leq s \leq d-1$, takes exponential time. 
\end{proof}

The next lemma gives us an efficient method to compute an explicit formula for the acceptance probability function of a unary PFA with upper-triangular transition matrix.\footnote{A similar result is derived in Proposition 1 of \cite{ABM20}, without the claim of computing the polynomials in polynomial time.}

\begin{lem}\label{toEqnLem}
Let $\mathcal{P} = (u, A, v)$ be a unary probabilistic finite automaton such that $A$ is rational upper-triangular, and let $\lambda_0 = 1 > \lambda_1 > \cdots >\lambda_m\geq 0$ be distinct eigenvalues of $A$. Then there exist a constant $c \in \Q$ and univariate polynomials $p_1, \ldots, p_{m}$ over $\Q$, all of which can be computed in polynomial time, such that
\[
\mathcal{P}(a^k) = c + \sum_{i=1}^{m} p_i(k) \lambda_i^{k}.
\]
\end{lem}

\begin{proof}
First, we write $A$ in \emph{Jordan normal form} $A = S^{-1} J S$, where $S$ is a nonsingular (det$(S) \neq 0$) matrix consisting of the generalised eigenvectors of $A$. Recall that $A$ is a rational upper-triangular matrix. It follows that $J$ and $S$ must have rational entries because to compute them, we need to solve systems of linear equations over $\Q$. These calculations, that is, computing $J$, $S$ and $S^{-1}$ can be done in polynomial time. In fact, these problems are in NC, see \cite{BorodinGathenHopcroft82,Pap94}. Matrix $J$ has the form $J= \bigoplus_{i=0}^m \bigoplus_{j=1}^{n_i} J_{\ell_{i,j}}(\lambda_{i})$, where $J_{\ell_{i,j}}(\lambda_{i})$ is a $\ell_{i,j} \times \ell_{i,j}$ \emph{Jordan block} and $n_i$ is the geometric multiplicity of $\lambda_i$ (hence $\sum_{j=1}^{n_i} \ell_{i,j}$ is the algebraic multiplicity of $\lambda_i$). Recall that a Jordan block $J_{\ell}(\lambda)$ of size $\ell \times \ell$  that corresponds to an eigenvalue $\lambda$ has the form:
\[
J_{\ell}(\lambda) = \begin{pmatrix}\lambda & 1 & 0 & \cdots & 0 \\ 0 & \lambda & 1 & \cdots & 0 \\ 0 & 0 & \lambda & \cdots & 0 \\ \vdots & \vdots & \vdots & \ddots & \vdots \\ 0 & 0 & 0 & \cdots & \lambda \end{pmatrix} \in \mathbb{Q}^{\ell \times \ell}.
\]
Noting that $\binom{x}{y} = 0$ if $y > x$, we see that
\begin{equation}
J_{\ell}(\lambda)^{k} = \begin{pmatrix}\lambda^{k} & \binom{k}{1}\lambda^{k-1} & \binom{k}{2}\lambda^{k-2} & \cdots & \binom{k}{\ell-1}\lambda^{k-(\ell-1)} \\ 0 & \lambda^{k} & \binom{k}{1}\lambda^{k-1} & \cdots & \binom{k}{\ell-2}\lambda^{k-(\ell-2)} \\ 0 & 0 & \lambda^{k} & \cdots & \binom{k}{\ell-3}\lambda^{k-(\ell-3)} \\ \vdots & \vdots & \vdots & \ddots & \vdots \\ 0 & 0 & 0 & \cdots & \lambda^{k} \end{pmatrix}.
\label{jordanform}
\end{equation}
Note that the entries of $J_{\ell}(\lambda)^{k}$ have the form $q_{i,j}(k)\lambda^k$, where $q_{i,j}(k)$ are polynomials over $\Q$ that can be computed in polynomial time. Namely, $q_{i,i+p}(k) = \binom{k}{p}\lambda^{-p}$ for $0 \leq p\leq \ell-i$, and $q_{i,j}(k) = 0$ for $i>j$. Note that even though $p$ appears in the exponent of $\lambda^{-p}$ and as $p!$ in $\binom{k}{p}$, these values are still computable in PTIME from the input data because $p$ is bounded by the dimension of the matrix, which in turn is bounded by the input size.

Next, we note that $J^k = \bigoplus_{i=0}^m \bigoplus_{j=1}^{n_i} J_{\ell_{i,j}}(\lambda_{i})^k$. Hence the entries of $J^k$ have the form $p_{s,t}(k)\lambda_i^k$, where $p_{s,t}(k)$ are polynomials over $\Q$. So we can write the function $\mathcal{P}(a^k)$ as follows:
\[
\mathcal{P}(a^k) = u^T A^k v = (u^T S^{-1}) J^k (Sv).
\]
Note that in the above equation, $u^T S^{-1}$ and $Sv$ are rational vectors. It follows that
\[
\mathcal{P}(a^k) = \sum_{i=0}^{m} p_i(k) \lambda_i^{k}
\]
for some polynomials $p_i(k)$ over $\Q$. In fact, these polynomials are rational linear combinations of those $p_{s,t}(k)$ that multiply $\lambda_i^k$ in the expression for $J^k$, and so they can be computed in polynomial time.

Finally, recall that $\lambda_0=1$ and note that the Jordan blocks that correspond to the dominant eigenvalues of a stochastic matrix have size $1\times 1$ (for the proof of this fact see, e.g.\ \cite[Theorem~6.5.3]{fried}). It follows from (\ref{jordanform}) that the terms $\lambda_0^k$ in the formula for $J^k$ are multiplied by constant polynomials $p_{s,t}(k)=1$. Hence $p_0(k)=c$ for some constant $c\in \Q$. 
\end{proof}

The next technical lemma is crucial in our later analysis of the running time of the algorithms for the emptiness and $\lambda$-reachability problems presented in Lemmas \ref{lem:case1} and \ref{lem:case2}.

\begin{lem}\label{lem:log}
Let $D\in \R$ be such that $\ln D>2$. Then for all $x>3D\ln D$, we have $D \ln x < x$.
\end{lem}

\begin{proof}
Our goal is to find $x_0>0$ such that every $x>x_0$ satisfies $D \ln x < x$. First, let us make a substitution $x=Dt$, where $t>1$. Then we can rewrite $D \ln x < x$ as follows
\begin{align*}
    D \ln(Dt) &< Dt,\\
    \ln t + \ln D &< t.
\end{align*}
We want to find $t_0>1$ such that every $t>t_0$ satisfies $\ln t + \ln D < t$. Let us make another substitution $t=\ln D + u \ln \ln D$, where $u>0$. Then we can write the previous inequality as
\begin{align}
\ln(\ln D + u \ln \ln D) + \ln D &< \ln D + u \ln \ln D, \nonumber \\
\ln\left(\ln D \left(1 + u \frac{\ln \ln D}{\ln D}\right)\right) &< u \ln \ln D, \nonumber \\
\ln\ln D  + \ln \left(1 + u \frac{\ln \ln D}{\ln D}\right) &< u \ln \ln D. \label{eq:c}
\end{align}
So we need to find $u_0>0$ such that for all $u>u_0$, the inequality (\ref{eq:c}) holds. In order to do this, we can replace the left-hand side of (\ref{eq:c}) with a larger value using $\ln \left(1 + u \frac{\ln \ln R}{\ln R}\right) < u \frac{\ln \ln R}{\ln R}$. Thus we obtain
\begin{align*}
\ln\ln D  + u \frac{\ln \ln D}{\ln D} &< u \ln \ln D,\\
1 + \frac{u}{\ln D} &< u, \qquad \ln D +u < u\ln D, \qquad \frac{\ln D}{\ln D-1} < u.
\end{align*}
Recall that by our assumption $\ln D>2$. In this case, $\frac{\ln D}{\ln D-1} < 2$, and hence we can choose $u_0=2$. This gives us the values $t_0 = \ln D + u_0 \ln \ln D = \ln D + 2 \ln \ln D$ and $x_0 = D t_0 = D ( \ln D + 2 \ln \ln D)$. Since $\ln \ln D < \ln D$, we can choose $x_0$ to be $x_0=3D \ln D$.
\end{proof}

We now proceed to the proof of our main result. We split the analysis into two cases depending on whether or not the cutpoint $\lambda$ coincides with the limit $\lim_{k \to \infty}\mathcal{P}(a^k)$, which is unique by Lemma~\ref{toEqnLem}.

\begin{lem}\label{lem:case1bound}
Let $\mathcal{P} = (u, A, v)$ be a unary probabilistic finite automaton such that $A$ is rational upper-triangular. Let $\lambda \in [0,1] \cap \Q$ be a cutpoint and assume that $c = \lim_{k \to \infty}\mathcal{P}(a^k)\neq \lambda$. Then we can compute in PTIME a bound $k_0>0$ such that for all $k> k_0$, we have
\[
|\mathcal{P}(a^k) - c| < \frac{|c-\lambda|}{2}.
\]
Moreover, we give an explicit formula for $k_0$ in (\ref{eq:k0}).
\end{lem}

\begin{proof}
By Lemma~\ref{toEqnLem}, we can write
$
\mathcal{P}(a^k) = c + \sum_{i=1}^{m} p_i(k) \lambda_i^{k}
$,
where $1>\lambda_1> \cdots >\lambda_m$ are the eigenvalues of $A$. Note that $c$, $\lambda_i$'s and the coefficients of $p_i$ are rational numbers that can be computed in polynomial time. By assumption, $\lim_{k \to \infty} \mathcal{P}(a^k) = c \neq \lambda$. Let $\epsilon = |c-\lambda|/2$. We now show how to compute an integer $k_0 >0$ such that $|\mathcal{P}(a^k) - c| < \epsilon$ for all $k > k_0$.

Let each $p_i(k)$ have the form $p_i(k) = a_{i,s} k^s +a_{i,s-1}k^{s-1} + \cdots + a_{i,0}$, where $s\leq n$ is the size of the largest Jordan block in the Jordan normal form of $A$ (we do not assume here that $a_{i,s}\neq 0$). Then for all $k>0$ we have
\[
\left| \sum_{i=1}^{m} p_i(k) \lambda_i^{k} \right| \leq \lambda_1^k \sum_{i=1}^{m} \left| p_i(k) \right| \leq \lambda_1^k k^s \sum_{i=1}^{m} \sum_{j=0}^{s} \left| a_{i,j} \right| = d\, k^s \lambda_1^k,
\]
where $d = \sum_{i=1}^{m} \sum_{j=0}^{s} \left| a_{i,j} \right|\in \Q$
can be computed in polynomial time by Lemma \ref{toEqnLem}.

Let $k_1>0$ be a number to be defined later such that for all $k>k_1$,
\[
k^s < \left( \frac{1}{\sqrt{\lambda_1}} \right)^k = \lambda_1^{-\frac{k}{2}}.
\]
Then for all $k>k_1$, we have $d\, k^s \lambda_1^k < d \lambda_1^{\frac{k}{2}}$. Thus we need to find $k_0\geq k_1$ such that for all $k>k_0$, we have $\lambda_1^{\frac{k}{2}} < \epsilon/d$. Note that if $\epsilon/d\geq 1$, then we can take $k_0=k_1$. Hence we assume that $\epsilon/d<1$.

The inequality $\lambda_1^{\frac{k}{2}} < \epsilon/d$ is equivalent to $k \ln \lambda_1 < 2\ln(\epsilon/d)$. Since $\ln \lambda_1<0$, the previous inequality is equivalent to
\begin{equation}\label{eq:a}
k> \frac{2\ln(\epsilon/d)}{\ln \lambda_1} = \frac{2\ln(d/\epsilon)}{-\ln \lambda_1}.
\end{equation}
To determine $k_0$, we need an upper bound on the right-hand side of (\ref{eq:a}). We will use the fact that for any rational $r>1$, $
\ln r < \log_2 r \leq \log_2\lceil r \rceil < \mathrm{bins}(\lceil r \rceil)$,
where $\mathrm{bins}(n)$ is the size of the binary representation of $n$. Thus $\mathrm{bins}(\lceil r \rceil)$ gives a polynomially computable integer upper bound for $\ln r$.

Next, using the fact that $\ln(1+x) < x$ for $x\neq 0$, we obtain
\[
\ln \lambda_1  = \ln (1+(\lambda_1-1)) < \lambda_1-1,
\]
which gives $-\ln \lambda_1 > 1-\lambda_1$. Hence a polynomially computable upper bound on the right-hand side of (\ref{eq:a}) is
\begin{equation}\label{eq:d}
     \frac{2\ln(d/\epsilon)}{-\ln \lambda_1} < \frac{2\, \mathrm{bins}(\lceil d/\epsilon \rceil)}{1-\lambda_1}.
\end{equation}

Next we compute a value $k_1$ such that for all $k>k_1$:
\begin{equation}\label{eq:b}
    k^s < \lambda_1^{-\frac{k}{2}}\qquad \text{or, equivalently,}\qquad C \ln k < k,
\end{equation}
where $C=\dfrac{2s}{-\ln \lambda_1}$.
Using the fact that $\ln(1+x)<x$ for $x\neq 0$, we obtain $C<\dfrac{2s}{1-\lambda_1}$. Hence in order to find $k_1$, we can replace $C$ in (\ref{eq:b}) with $D = \dfrac{2s}{1-\lambda_1}$. In addition, we can assume that $\ln D>2$, since otherwise we can replace $D$ with a larger value that satisfies this condition, e.g.\ with $D=9$. Now, Lemma \ref{lem:log} implies that every $k>3D\ln D$ satisfies $D\ln k < k$. To make this value polynomially computable, we can choose it to be
\[
k_1 = 3 \lceil D \rceil \mathrm{bins}(\lceil D \rceil), \text{ where } D = \max \left\{ \dfrac{2s}{1-\lambda_1},\, 9\right \}.
\]
Finally, combining the right-hand side of (\ref{eq:d}) with the above formula, we can define
\begin{equation}\label{eq:k0}
k_0 = \max \left\{\dfrac{2\, \mathrm{bins}(\lceil d/\epsilon \rceil)}{1-\lambda_1},\   3 \lceil D \rceil \mathrm{bins}(\lceil D \rceil)\right\}.
\end{equation}
Note that all the values that appear in the above formula, e.g.\ $\epsilon$, $d$ and $D$, can be computed in polynomial time from the input data.
\end{proof}

\begin{lem}\label{lem:case1}
Let $\mathcal{P} = (u, A, v)$ be a unary probabilistic finite automaton such that $A$ is rational upper-triangular, and let $\lambda \in [0,1] \cap \Q$ be a cutpoint. Assuming that $\lambda \neq \lim_{k \to \infty}\mathcal{P}(a^k)$, the (strict) emptiness and $\lambda$-reachability problems for $\mathcal{P}$ and $\lambda$ are decidable in EXPTIME.
\end{lem}

\begin{proof}
In Lemma \ref{lem:case1bound}, we have derived a polynomially computable bound $k_0$ such that $\mathcal{P}(a^k) = u^T A^k v \in (c - \epsilon, c+ \epsilon)$, where $\epsilon = |c-\lambda|/2$. In particular, $\mathcal{P}(a^k) \neq \lambda$ for all $k > k_0$. Now, to decide the $\lambda$-reachability problem, we need to check for each integer $k \in [0, k_0]$ whether $u^T A^k v = \lambda$. Note that the number of integers in $[0, k_0]$ is $\mathcal{O}(2^{\mathrm{bins}(k_0)})$, which is exponential in the instance size. Also, computing $A^k$ for a given $k \in [0, k_0]$ takes exponential time because $\mathrm{bins}(A^k) = \mathcal{O}(2^{\mathrm{bins}(k_0)}\mathrm{bins}(A))$. So, the overall algorithm is in EXPTIME.

In a similar way, we can decide the (strict) emptiness problem in EXPTIME. For instance, suppose $\lambda >c$. Then for all $k>k_0$, we have $\mathcal{P}(a^k) < c+\epsilon < \lambda$. Thus deciding whether there exists $k$ such that $\mathcal{P}(a^k) < \lambda$ is trivial. Suppose we want to know if there exists $k$ such that $\mathcal{P}(a^k) \geq \lambda$. In this case, we need to check for each integer $k \in [0, k_0]$ whether $u^T A^k v \geq \lambda$. By the same argument as before, this can be done in EXPTIME.
\end{proof}

\begin{lem}\label{lem:case2bound}
Let $\mathcal{P} = (u, A, v)$ be a unary polynomially ambiguous probabilistic finite automaton such that $A$ is upper-triangular. Let $\lambda \in [0,1] \cap \Q$ be a cutpoint such that $\lambda = \lim_{k \to \infty}\mathcal{P}(a^k)$. Then we can compute in PTIME a bound $k_0>0$ such that
\begin{align*}
\text{either}\quad \mathcal{P}(a^k) &> \lambda\quad \text{for all } k>k_0,\\
\text{or}\quad \mathcal{P}(a^k) &< \lambda\quad \text{for all } k>k_0.
\end{align*}
Moreover, we give an explicit formula for $k_0$. Namely, $k_0 = \max\{k_1, k_2\}$, where $k_1$ is defined in (\ref{eq:x}) and $k_2$ is defined in (\ref{eq:k2_1}) and (\ref{eq:k2_2}).
\end{lem}

\begin{proof}
Recall that by Lemma~\ref{toEqnLem}, we can write
$
\mathcal{P}(a^k) = c + \sum_{i=1}^{m} p_i(k) \lambda_i^{k}
$,
where $1>\lambda_1> \cdots >\lambda_m$ are the eigenvalues of $A$, and $c$, $\lambda_i$'s and the coefficients of $p_i$ are rational numbers that can be computed in polynomial time. By our assumption, $\lambda = \lim_{k \to \infty}\mathcal{P}(a^k) = c$. Let each $p_i(k)$ have the form $p_i(k) = a_{i,s} k^s +a_{i,s-1}k^{s-1} + \cdots + a_{i,0}$, where $s\leq n$ is the size of the largest Jordan block in the Jordan normal form of $A$ (we do not assume here that $a_{i,s}\neq 0$).

In addition, assume that the leading coefficient of $p_1(k)$ is $a_{1,t}$, for some $t\leq s$. Suppose that $a_{1,t}>0$. We now show how to compute $k_0$ such that $\mathcal{P}(a^k) > \lambda$ for all $k>k_0$. (The case when $a_{1,t}<0$ and $\mathcal{P}(a^k) < \lambda$ for all $k>k_0$ is symmetric).

First, we compute $k_1$ such that $p_1(k)> \frac{1}{2}a_{1,t} k^t$ for all $k > k_1$. To do this, we will use the following inequalities:
\begin{align*}
    &a_{1,t} k^t +a_{1,t-1}k^{t-1} + \cdots + a_{1,0} > \frac{1}{2}a_{1,t} k^t\ \ \Longleftrightarrow\ \ \frac{1}{2}a_{1,t} k^t +a_{1,t-1}k^{t-1} + \cdots + a_{1,0} > 0\\
    &\text{and } |a_{1,t-1}k^{t-1} + \cdots + a_{1,0}| \leq k^{t-1}(|a_{1,t-1}| + \cdots + |a_{1,0}|) = k^{t-1}\sum_{j=0}^{t-1} |a_{1,j}| \quad \text{if } k\geq 1.
\end{align*}
So, the inequality $p_1(k)> \frac{1}{2}a_{1,t} k^t$ follows from $\frac{1}{2}a_{1,t} k^t > k^{t-1}\sum_{j=0}^{t-1} |a_{1,j}|$, which is equivalent to $k > \frac{2}{a_{1,t}}\sum_{j=0}^{t-1} |a_{1,j}|$. Therefore, we conclude that
\begin{equation}\label{eq:x}
p_1(k)> \frac{1}{2}a_{1,t} k^t\quad \text{for all $k$ such that } k > k_1 := \max \left\{1,\ \frac{2}{a_{1,t}} \sum_{j=0}^{t-1} |a_{1,j}|\right\}.
\end{equation}

Now we want to find $k_2\geq k_1$ such that for all $k>k_2$, we have
\begin{equation}\label{eq:y}
\lambda_1^k p_1(k) + \lambda_2^k p_2(k) + \cdots + \lambda_m^k p_m(k) > 0.
\end{equation}
Note that
\begin{equation}\label{eq:z}
|\lambda_2^k p_2(k) + \cdots + \lambda_m^k p_m(k)| \leq \lambda_2^k (|p_2(k)| + \cdots + |p_m(k)|) \leq d k^s \lambda_2^k,
\end{equation}
where $d=\sum_{i=2}^m \sum_{j=0}^s |a_{i.j}|$. Using (\ref{eq:x}) and (\ref{eq:z}), we see that (\ref{eq:y}) holds whenever $k>k_1$ and $dk^s \lambda_2^k < \frac{1}{2}a_{1,t} k^t \lambda_1^k$, which is equivalent to
\begin{align}
\frac{2dk^{s-t}}{a_{1,t}} < {\left(\frac{\lambda_1}{\lambda_2}\right)}^k \qquad \text{or}\qquad
&\ln \frac{2d}{a_{1,t}} + (s-t) \ln k  < k \ln \frac{\lambda_1}{\lambda_2} \nonumber\\
&\frac{1}{\ln \lambda_1/\lambda_2}\left(\ln \frac{2d}{a_{1,t}} + (s-t) \ln k\right)  < k.\label{eq:e}
\end{align}
We will use the following inequality
\[
\ln \frac{\lambda_1}{\lambda_2} = - \ln \frac{\lambda_2}{\lambda_1} = - \ln \left( 1 + \frac{\lambda_2-\lambda_1}{\lambda_1} \right) >
- \frac{\lambda_2-\lambda_1}{\lambda_1} > \lambda_1-\lambda_2.
\]
Then we can replace (\ref{eq:e}) with a stronger inequality
\begin{equation}\label{eq:f}
    \frac{1}{\lambda_1 - \lambda_2}\left(\ln \frac{2d}{a_{1,t}} + (s-t) \ln k\right)  < k.
\end{equation}
In the following, we will assume $t<s$ since otherwise (\ref{eq:f}) simplifies to $\frac{1}{\lambda_1 - \lambda_2}\ln \frac{2d}{a_{1,t}} < k$.
Let us make the substitution $k=t {\left( \frac{2d}{a_{1,t}} \right)}^{-\frac{1}{s-t}}$, where $t>0$. Then (\ref{eq:f}) can be written as
\begin{align*}
\frac{1}{\lambda_1 - \lambda_2}\left(\ln \frac{2d}{a_{1,t}} + (s-t) \ln t + (s-t) \frac{-1}{s-t} \ln \frac{2d}{a_{1,t}}\right)  &< t {\left( \frac{2d}{a_{1,t}} \right)}^{-\frac{1}{s-t}}\\
{\left( \frac{2d}{a_{1,t}} \right)}^{\frac{1}{s-t}}\frac{s-t}{\lambda_1 - \lambda_2} \ln t &< t.
\end{align*}
Let $D = \max \left\{9,\ {\left( \frac{2d}{a_{1,t}} \right)}^{\frac{1}{s-t}}\frac{s-t}{\lambda_1 - \lambda_2}\right\}$. Here $9$ is needed to satisfy the requirement $\ln D>2$ in Lemma \ref{lem:log}. Then by Lemma \ref{lem:log}, the above inequality holds when $t> 3D \ln D$. Therefore, (\ref{eq:f}) and hence (\ref{eq:e}) holds when $k > 3{\left( \frac{2d}{a_{1,t}} \right)}^{-\frac{1}{s-t}}D \ln D$. To make this bound polynomially computable, we can simplify it as follows. Suppose that $2d \geq a_{1,t}$. Then (\ref{eq:e}) holds when
\begin{equation}\label{eq:k2_1}
k > k_2 := 3\lceil E \rceil \mathrm{bins}(\lceil E \rceil), \quad \text{where } E = \max \left\{9,\ \frac{2d}{a_{1,t}}\cdot \frac{s-t}{\lambda_1 - \lambda_2}\right\}
\end{equation}
because in this case ${\left( \frac{2d}{a_{1,t}} \right)}^{\frac{1}{s-t}} \leq \frac{2d}{a_{1,t}}$ and ${\left( \frac{2d}{a_{1,t}} \right)}^{-\frac{1}{s-t}} \leq 1$. On the other hand, if $2d < a_{1,t}$, then (\ref{eq:e}) holds when
\begin{equation}\label{eq:k2_2}
k > k_2 := 3 \left\lceil\frac{a_{1,t}}{2d}E\right\rceil \mathrm{bins}(\lceil E \rceil), \quad \text{where } E = \max \left\{9,\ \frac{s-t}{\lambda_1 - \lambda_2}\right\}
\end{equation}
because in this case ${\left( \frac{2d}{a_{1,t}} \right)}^{-\frac{1}{s-t}} < {\left(\frac{2d}{a_{1,t}}\right)}^{-1}$ and ${\left( \frac{2d}{a_{1,t}} \right)}^{\frac{1}{s-t}} < 1$.

Finally, we conclude that (\ref{eq:y}) holds for all $k> k_0 :=\max \{k_1,k_2\}$, where both $k_1$ and $k_2$ are computable in PTIME.
\end{proof}

\begin{lem}\label{lem:case2}
Let $\mathcal{P} = (u, A, v)$ be a unary polynomially ambiguous probabilistic finite automaton such that $A$ is upper-triangular and let $\lambda \in [0,1] \cap \Q$ be a cutpoint. Assuming that $\lambda = \lim_{k \to \infty}\mathcal{P}(a^k)$, the (strict) emptiness and $\lambda$-reachability problems for $\mathcal{P}$ and $\lambda$ are decidable in EXPTIME.  
\end{lem}

\begin{proof}
In Lemma \ref{lem:case2bound}, we obtained a polynomially computable value $k_0$ such that either $\mathcal{P}(a^k) > \lambda$ for all $k>k_0$ or $\mathcal{P}(a^k) < \lambda$ for all $k>k_0$. Using the same argument as at the end of the proof of Lemma \ref{lem:case1}, we can show that the (strict) emptiness and $\lambda$-reachability problems are decidable in EXPTIME.
\end{proof}

We are now ready to give a proof of Theorem~\ref{mainthm}.

\begin{proof}[Proof of Theorem \ref{mainthm}]
Let $\mathcal{P} = (u, A, v)$ be a polynomially ambiguous unary PFA. By Lemma~\ref{lem:split}, we can compute in EXPTIME a set of $d$ polynomially ambiguous unary PFAs $\{\mathcal{P}_{s} = (u_s, U, v')\ |\ 0 \leq s \leq d-1\}$ such that $U$ is rational upper-triangular and
\[
\mathcal{P}(a^{rd+s}) = \mathcal{P}_{s}(a^{r}) = u_{s}^TU^rv',
\]
where $0\leq s \leq d-1$.

Suppose $\lambda$ is a given cutpoint. If we want to decide whether there exists $k$ such that $\mathcal{P}(a^{k}) = \lambda$ (or $\mathcal{P}(a^{k}) \geq \lambda$), we can check for every $s$ from $0$ to $d-1$ whether there exists $r$ such that $\mathcal{P}_s(a^{r}) = \lambda$ (or $\mathcal{P}_s(a^{r}) \geq \lambda$, respectively), which can be done in EXPTIME using Lemmas \ref{lem:case1} and \ref{lem:case2}. Namely, we will use Lemma \ref{lem:case1} if $\lambda \neq c_s$ and Lemma \ref{lem:case2} if $\lambda = c_s$ for the current values of $s\in [0,d-1]$. Finally, we note that even though the value of $d$ can be exponential in the input size, the whole procedure can still be done in EXPTIME.
\end{proof}

Skolem's problem is at least NP-hard \cite{BP02} implying that the $\lambda$-reachability and emptiness problems are also NP-hard, at least for PFA of exponential ambiguity. Our next result shows that NP-hardness can be established even for unary PFAs of \emph{finite ambiguity}.

\begin{prop}\label{thm:NPhard}
The $\lambda$-reachability and emptiness problems for unary finitely ambiguous Probabilistic Finite Automata $\mathcal{P} = (u, A, v)$ with $\{0,1\}$-matrix $A$ are NP-hard.
\end{prop}

\begin{proof}
The NP-hardness of Skolem's problem was established in \cite{BP02}. Specifically, Corollary~1.3 of \cite{BP02} states that the problem of determining, for a given matrix $A \in \{0, 1\}^{n \times n}$ and row vectors $b, c \in \{0, 1\}^n$, if $b^TA^kc = 0$ for some $k \geq 0$ is NP-hard. 
Examination of the proof of this corollary shows that in fact $\mathcal{P}$ is finitely ambiguous as we shall show.

The proof of Theorem~1.1 of \cite{BP02} shows a reduction of 3SAT on $m$ clauses with $n$ letters to a unary rational expression $E$ of the form:
\[
E = \bigcup_{j = 0}^{k} a^{z_j}(a^{r_j})^* ,
\]
where $k = \mathcal{O}(n^3 m)$ and $z_j, r_j = \mathcal{O}(n^6)$ as is not difficult to see from the proof in \cite{BP02}. Notice then that each $z_j, r_j$ represented in unary has a polynomial size in terms of the 3SAT instance and thus $E$ also has a polynomial representation size.

We may then invoke Kleene's theorem \cite{Kl56} to state that the language recognised by $E$ is also recognised by an NFA $\mathcal{P} = (b, \{A\}, c)$ which thus allows the derivation of  Corollary~1.3 of \cite{BP02}. Note that $E$ is simply the union of rational expressions of the form $E_j = a^{z_j}(a^{r_j})^*$. Each $E_j$ can be transformed to an NFA $\mathcal{N}_j$ with $z_j+r_j+1$ states $S_j = n_{0,j}, \ldots, n_{z_j,j}, n_{z_j+1,j}, \ldots, n_{z_j+r_j,j}$ with initial state $n_{0,j}$, final state $n_{z_j+1,j}$ and transition function $\delta: S_j \times \{a\} \to S_j$ given by $\delta(n_{i,j},a) = n_{i+1,j}$ for $0 \leq i \leq z_j+r_j-1$ and $\delta(n_{z_j+r_j},a) = n_{z_j+1,j}$. 

We may then form an NFA $\mathcal{N}$ by $\mathcal{N} = \bigcup_{j=0}^k \mathcal{N}_j$ with the usual construction. In this case, $\mathcal{N}$ has set of initial states $\{n_{0,j}\ |\ 1 \leq j \leq k\}$, set of final states $\{n_{z_j+1,j}\ |\ 1 \leq j \leq k\}$ and states in disjoint subsets $S_j$ and $S_{j'}$ with $j \neq j'$ are not connected. This implies by the IDA property of \cite{WS91} that $\mathcal{N}$ is finitely ambiguous since there does not exist any state with two outgoing transitions (by which reasoning we also know that each row of $\mathcal{N}$'s transition matrix has exactly one entry $1$ with all others $0$). In fact one may see that $\mathcal{N}$ is $k$-ambiguous with $k = \mathcal{O}(n^3 m)$. The number of states of $\mathcal{N}$ is $d = \sum_{j = 0}^{k}z_j+r_j+1 = \mathcal{O}(n^9m)$ which is polynomial in the 3SAT instance representation size.

We note that actually $\mathcal{N}$ is already close to a PFA. Since each row  is zero except for exactly one entry $1$, matrix $A$ is stochastic. We thus consider Probabilistic Finite Automaton $\mathcal{P} = (u, \{A\}, c)$ where $u = \frac{b}{|b|}$ is the initial (stochastic) vector. $\mathcal{P}$  has polynomial ambiguity since $\mathcal{N}$ does. Therefore, deciding if there exists $k \geq 0$ such that $\mathcal{P}(a^k) = 0$ or $\mathcal{P}(a^k) \leq 0$ is NP-hard to determine, proving NP-hardness of the $\lambda$-reachability and emptiness problems. Since we did not modify $\mathcal{N}$ to derive $\mathcal{P}$ other than to scale the initial vector, the degree of ambiguity is retained.
%We now consider the Probabilistic Finite Automaton $\mathcal{P} = (u, \{B\}, v)$ where $u = \frac{[b | 0]}{|b|} \in \mathbb{Q}^{d+1}$, $v = [c | 0] \in \mathbb{Q}^{d+1}$ and
%\[
%B = \frac{1}{n}\begin{pmatrix} A & | & {\bf (n-1)}_n \\ \hline {\bf 0} & | & n\end{pmatrix} \in \mathbb{Q}^{(d+1) \times (d+1)},
%\] 
%with ${\bf (n-1)} = (n-1, \ldots, n-1) \in \mathbb{N}^d$.   
\end{proof}

\begin{cor}\label{cor:NPcmpl}
The $\lambda$-reachability and emptiness problems for unary polynomially ambiguous PFA $\mathcal{P} = (u, A, v)$ with $\{0,1\}$-matrix $A$ are NP-complete.
\end{cor}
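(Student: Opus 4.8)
The plan is to establish NP-completeness by combining two facts already available: the NP-hardness from Theorem~\ref{thm:NPhard} and the EXPTIME upper bound from Theorem~\ref{mainthm}. The hardness direction is immediate — since every finitely ambiguous PFA is in particular polynomially ambiguous, the NP-hardness of the $\lambda$-reachability and emptiness problems for unary finitely ambiguous PFA with $\{0,1\}$-matrix established in Theorem~\ref{thm:NPhard} transfers verbatim to the polynomially ambiguous case. So the entire work lies in showing membership in NP, i.e.\ sharpening the EXPTIME bound of Theorem~\ref{mainthm} to NP under the extra hypothesis that $A \in \{0,1\}^{n \times n}$.

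First I would observe what the $\{0,1\}$ hypothesis buys us. When $A$ is a $\{0,1\}$ stochastic matrix, each row has exactly one nonzero entry, so $A$ is a permutation-like ``functional'' matrix: it represents a function on the state set, hence the underlying graph is a disjoint union of ``rho''-shaped components (a path feeding into a cycle). Revisiting the proof of Lemma~\ref{lem:split}, each SCC $S_j$ is a single cycle of length $d_j$, and the key improvement is that the bad blow-up — the $\textrm{lcm}$ or product $d = \prod_j d_j$, which was exponential — is exactly what makes the period of the sequence $\mathcal{P}(a^k)$ exponential. The point is that we do not need to materialize all $d$ sub-automata $\mathcal{P}_s$; instead, for an NP algorithm we guess the witness length $k$ directly. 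Concretely, for a functional $\{0,1\}$ automaton the sequence $k \mapsto \mathcal{P}(a^k)$ is eventually periodic, with a ``tail'' of length at most $n$ (the longest path into a cycle) and a period dividing $\textrm{lcm}\{d_j\}$, and on the periodic part $\mathcal{P}(a^k) \in \{0,1,\dots\}/|b|$ takes only finitely many rational values determined by which final states lie on each cycle. So a witness for $\lambda$-reachability or emptiness either has length at most $n$ (check directly) or, if a witness exists at all in the periodic regime, one exists of length at most $n + \textrm{lcm}\{d_j\}$; but more usefully, by CRT the witness $k$ in the periodic regime can be described by specifying, for each cycle $C_j$ that the run reaches, a residue $k \equiv \rho_j \pmod{d_j}$, and these are simultaneously satisfiable iff they are pairwise compatible. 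Hence the NP algorithm is: guess a subset of components together with residues $\rho_j \bmod d_j$ (each $d_j \le n$, so this is polynomial-size data), check pairwise CRT-compatibility in polynomial time, compute the resulting value of $\mathcal{P}$ on that residue class (a fixed rational, computable in polynomial time since on a cycle the contribution stabilizes), and compare with $\lambda$; also exhaustively check all $k \le n$. Verifying this covers exactly the set of achievable acceptance values completes the membership argument.

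The main obstacle I anticipate is making the ``guess residues and apply CRT'' step watertight: one must argue that the value $\mathcal{P}(a^k)$ in the eventually-periodic regime depends only on the tuple of residues $(k \bmod d_j)_j$ over the relevant cycles and not on $k$ itself, and that the set of simultaneously realizable residue tuples is precisely the CRT-compatible ones — this requires carefully tracking, for the given initial distribution $u = b/|b|$ and final vector $c$, which runs survive (have nonzero probability) past a given length, and noting that once a run enters a cycle it stays there, so its contribution to $u^T A^k c$ for large $k$ is $\tfrac{1}{|b|}$ times an indicator depending only on $k \bmod d_j$. A cleaner alternative, which I would fall back on if the bookkeeping gets unwieldy, is to phrase it as: the NP certificate is simply the binary encoding of a witness $k$ with $\mathrm{bins}(k)$ polynomial — and to prove that if any witness exists then one exists with polynomially many bits, using that $\mathcal{P}(a^k)$ restricted to each residue class mod $\mathrm{lcm}\{d_j\}$ is eventually constant after at most $n$ steps, so the first witness in each of the (exponentially many but individually polynomial-description) residue classes occurs at $k \le n + \textrm{lcm}\{d_j\}$, and $\textrm{lcm}\{d_j\} \le \prod d_j \le n^n$ has $\mathrm{bins}$ only $n\log n$ — polynomial. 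Then verifying the certificate requires computing $u^T A^k c$ for a $k$ with polynomially many bits, which is polynomial by fast matrix exponentiation (and the entries stay bounded since $A$ is $\{0,1\}$ stochastic, so $A^k$ is again a $\{0,1\}$ matrix). I would write up whichever of these two framings is shortest, state that NP-hardness is inherited from Theorem~\ref{thm:NPhard}, and conclude NP-completeness.
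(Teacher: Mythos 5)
Your proof is correct, and it actually takes a cleaner route than the paper for the membership direction. The paper reuses its EXPTIME machinery: it observes that the decomposition parameter $d$ from Lemma~\ref{lem:split} and the thresholds $k_0$, $k_2$ from Lemmas~\ref{lem:case1} and~\ref{lem:case2} all have polynomial-size binary representations, so an NP machine can nondeterministically guess $s\in[0,d-1]$ and $k\in[0,k_0]$ (or $[0,k_2]$) and then verify $u_s^T U^k v' \;(\!=\!,\geq\!)\; \lambda$; the verification is polynomial because powers of a stochastic $\{0,1\}$-matrix stay $\{0,1\}$ and so exponentiation by squaring does not blow up. Your fallback framing bypasses Lemmas~\ref{toEqnLem}, \ref{lem:log}, \ref{lem:case1}, \ref{lem:case2} entirely: since $A$ is functional, $A^{k+L}=A^k$ for all $k\geq n$ with $L=\mathrm{lcm}\{d_j\}$, so the sequence $\mathcal{P}(a^k)$ is \emph{exactly} eventually periodic with preperiod $\leq n$ and period $\leq L$, and $\mathrm{bins}(n+L)$ is polynomial; one then guesses a $k$ with polynomially many bits directly and verifies by squaring. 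This is a genuinely more elementary argument (no Jordan forms, no analytic bounds) but it applies only to the $\{0,1\}$ case, whereas the paper's bounds work for general rational upper-triangular transition matrices and the corollary is obtained by merely noting the guesses are polynomial-size in the $\{0,1\}$ setting. The NP-hardness direction is identical in both: it is inherited from Theorem~\ref{thm:NPhard} because finite ambiguity is a special case of polynomial ambiguity. Your more elaborate CRT-on-residues variant is also sound but is unnecessary overhead; if writing this up, the direct ``guess a poly-bit $k$'' version is the one to keep.
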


\begin{proof}
NP-hardness follows from Proposition~\ref{thm:NPhard} since finite ambiguity is a stronger property than polynomially ambiguity. To prove the NP upper bound, we will show that the algorithm in the proof of Theorem \ref{mainthm} can be done in NP. We again use Lemmas \ref{lem:split}, \ref{toEqnLem}, \ref{lem:case1} and \ref{lem:case2}. Note that the value $d$ from Lemma \ref{lem:split} can be exponential. However, its binary presentation has polynomial size. So, instead of cycling though all $s$ from $0$ to $d-1$, we can nondeterministically guess in polynomial time a value $s\in [0,d-1]$.

Next, we note that the values of $k_0$ in Lemma \ref{lem:case1} and $k_2$ in Lemma \ref{lem:case2} also have binary representations of polynomial size. Again, instead of checking every $k$ in $[0,k_0]$ or $[0,k_2]$, we can nondeterministically guess $k$ in polynomial time.

Finally, in the verification step of our algorithm we need to compute the matrices $A^d$, $A^s$ and $(A^d)^k$. This can be done in polynomial time using exponentiation by squaring. Indeed, the exponentiation by squaring requires polynomially many steps. Also, any power of a stochastic $\{0,1\}$-matrix is also a stochastic $\{0,1\}$-matrix, so the entries of the power matrices do not grow in size.
\end{proof}

As an additional application of the techniques developed within this section, we can also show that the value of a unary PFA can be determined.

\begin{cor}
Let $\mathcal{P} = (u, A, v)$ be a polynomially ambiguous unary Probabilistic Finite Automaton with acceptance function $\mathcal{P}(a^k) = u^TA^kv$. Then we can compute the value of $\mathcal{P}$ in EXPTIME.
\end{cor}

\begin{proof}
By Lemma~\ref{lem:split}, we can compute in EXPTIME a set of $d$ PFA $\{\mathcal{P}_{s} = (u_s, U, v')\ |\ 0 \leq s \leq d-1\}$ which are unary and polynomially ambiguous, such that $U$ is rational upper-triangular and
$\mathcal{P}(a^k) = \mathcal{P}_{s}(a^{r}) = u_{s}^TU^rv'$, where $k = rd + s$ with $0\leq s \leq d-1$. The size of $d$ depends on the structure of $\mathcal{P}$ but is no more than exponential in the description size of $\mathcal{P}$. The value of $\mathcal{P}$ defined by $\sup_{k \geq 0}\mathcal{P}(a^k)$ is clearly the same as $\max_{0 \leq s \leq d}\{\sup_{k \geq 0}\mathcal{P}_s(a^k)\}$ and thus we may simply compute the value of each of these $d$ unary PFA with upper triangular stochastic transition matrices. Assume then for the rest of the proof that $\mathcal{P}$ is a PFA with a rational upper triangular transition matrix $A \in \mathbb{Q}^{n \times n}$.

Lemma~\ref{toEqnLem} shows that we can write the acceptance probability function of $\mathcal{P}$ as:
\[
\mathcal{P}(a^k) = c + \sum_{i=1}^{m} p_i(k) \lambda_i^{k},
\]
where $|\lambda_i| < 1$ and $p_i$ are polynomials.
There are two cases that we consider. Either 
$\sup_{k \geq 0}\mathcal{P}(a^k) = c$, or else $\sup_{k \geq 0}\mathcal{P}(a^k) > c$. 

By defining $\lambda = c$ and then applying Lemma~\ref{lem:case2}, we can determine if the first or second cases holds, i.e., we can determine if there exists a word $a^k$ such that $\pfa(a^k) > \lambda$. If $\pfa(a^\ell) \leq \lambda$ for all $\ell \geq 0$, then the value of $\pfa$ is $\lambda$ (which is reached in the limit) and we may thus stop. Assume then that $\pfa(a^k) > c$ for at least one $k > 0$.

The outline of our approach is now as follows. Suppose the leading coefficient $a_{1,t}$ of the polynomial $p_1(k)$ is positive, that is, $a_{1,t}>0$ (the case when $a_{1,t}<0$ is similar and even simpler). Also, consider the derivative of $\mathcal{P}(a^k)$ with respect to $k$:
\[
\mathcal{P'}(a^k) = \sum_{i=1}^{m} q_i(k) \lambda_i^{k}.
\]
The leading coefficient of $q_1(k)$ is equal to $a_{1,t}\ln \lambda_1$. Note that since $\ln \lambda_1<0$, we have $a_{1,t}\ln \lambda_1<0$.

Using the same ideas as in the proof of Lemma~\ref{lem:case2}, we can compute in polynomial time a value $k_0$ such that $\pfa(a^k) > c$ and $\mathcal{P'}(a^k)<0$ for all $k>k_0$. Notice that the logarithms $\ln \lambda_i$, for $i=1,\ldots,m$, that appear in $\mathcal{P'}(a^k)$ may be irrational. In order to do the above computations in polynomial time, we can use the following rational upper bounds: $\ln \lambda_1 \leq \lambda_1 -1 < 0$ and $|\ln \lambda_i| = \ln (1/\lambda_i) \leq 1/\lambda_i -1$ for $i=1,\ldots,m$. Hence $\pfa(a^k)$ is greater than $c$ and is strictly decreasing for all $k>k_0$.
Therefore, we only need to evaluate $\{\pfa(a^\ell)\ |\ 1 \leq \ell \leq  k_0 \}$ in order to find the value of $\mathcal{P}$. This can be done in EXPTIME using the same argument as at the end of the proof of Lemma~\ref{lem:case1}.
\end{proof}

\section{Binary PFA}

The following theorem shows that the $\lambda$-reachability and emptiness problems are NP-hard for binary PFA of \emph{polynomial ambiguity} with \emph{commuting} transition matrices (and the matrices can be assumed fixed in the case of $\lambda$-reachability and nonstrict emptiness). The emptiness problem for non-commutative binary PFA over $25$ states is known to be undecidable, at least over exponentially ambiguous PFA \cite{Hir}. Emptiness is also undecidable for exponentially ambiguous commutative PFA, although with many more states and a larger alphabet \cite{Bell19}. 

\begin{thm}\label{thm:NPbin}
The $\lambda$-reachability and emptiness problems are NP-hard for binary  probabilistic finite automata of polynomial ambiguity with commuting matrices of dimension $9$ for $\lambda$-reachability, $37$ for nonstrict emptiness, and $40$ for strict emptiness. Moreover, the matrices can be assumed \emph{fixed} for the $\lambda$-reachability and nonstrict emptiness problems.
\end{thm}
\begin{proof}
We use a reduction from the solvability of binary quadratic Diophantine equations. Namely, given an equation of the form $ax^2 + by - c = 0$, where $a, b, c \in \mathbb{N}$, it is NP-hard to determine if there exists $x, y \in \mathbb{N}$ satisfying the equation \cite{MA78}. We begin with the $\lambda$-reachability problem before considering the emptiness problem.

\noindent {\bf $\lambda$-Reachability reduction.}
Let $A = \begin{pmatrix} 1 & 1 \\ 0 & 1 \end{pmatrix}$ and note that $A^k = \begin{pmatrix} 1 & k \\ 0 & 1 \end{pmatrix}$ and that $(A \otimes A)^k_{1,4} = (A^k \otimes A^k)_{1, 4} = k^2$. We form a weighted automaton\footnote{For our purposes here, by a \emph{weighted automaton} we simply mean an automaton whose initial vector, final vector, and transition matrices are over nonnegative integers.} $W_1$ on binary alphabet $\Sigma = \{h, g\}$ in the following way to encode $ax^2 + by$ (we will deal with $c$ later). Let $W_1 = (u_1, \phi, v_1)$ where $u_1, v_1 \in \mathbb{N}^7$ and $\phi: \Sigma^* \to \mathbb{N}^{7 \times 7}$. We define $u_1 = (a, 0, 0, 0, b, 0, 0)^T$, $v_1 = (0, 0, 0, 1, 0, 1, 0)^T$ and $\phi(\ell) = \frac{1}{4}\phi'(\ell)$ for $\ell \in\{h, g\}$ with
\[
\phi'(h) = \left( \begin{array}{@{}c|l@{}} (A \otimes A) \oplus I_2 &  t_1 \\ \hline {\bf 0}^6 & 4 \end{array}\right), \quad
\phi'(g) = \left(\begin{array}{@{}c|l@{}} I_4 \oplus A & t_2 \\ \hline {\bf 0}^6 & 4 \end{array}\right),
\]
with ${\bf 0}^k = (0, 0, \ldots, 0) \in \mathbb{N}^k$, $t_1 = (0, 2, 2, 3, 3, 3)^T$ and $t_2 = (3, 3, 3, 3, 2, 3)^T$. We see then that each row of $\phi'(\ell)$ is nonnegative and sums to $4$, thus $\phi(\ell)$ is stochastic for $\ell \in \{g, h\}$. Furthermore, by the mixed product property of the Kronecker product, we see that $((A \otimes A) \oplus I_2)^x = (A^x \otimes A^x) \oplus I_2$ and $( I_4 \oplus A)^y =  I_4 \oplus A^y$ for $x, y \in \N$ and thus by the block upper triangular structure of $\phi'(h), \phi'(g)$, we see that
\[
\phi'(h^xg^y) = \left( \begin{array}{@{}c|l@{}} (A^x \otimes A^x) \oplus A^y &  t_{xy} \\ \hline {\bf 0}^6 &  4^{x+y} \end{array}\right),
\]
where $t_{xy}$ is a nonnegative vector maintaining the row sum at $4^{x+y}$. We now see that \begin{eqnarray}
u_1^T \phi(h^xg^y) v_1 = \frac{ax^2 + by}{4^{x+y}} \label{eqnn1}
\end{eqnarray}

We define a second weighted automaton $W_2 = (u_2, \psi, v_2)$ with $u_2=(c, 0)^T$, $v_2=(0, 1)^T$ and $\psi:\Sigma^* \to \N^{2 \times 2}$ with $\psi(\ell) = \frac{1}{4}\psi'(\ell)$ for $\ell \in\{h, g\}$ defined thus: $
\psi'(h) = \psi'(g) = \begin{pmatrix} 1 & 3 \\ 0 & 4 \end{pmatrix} $. 
We therefore see that
\begin{eqnarray}
u_2^T \psi(h^xg^y) v_2 = \frac{c(4^{x+y}-1)}{4^{x+y}} = c(1-\frac{1}{4^{x+y}})  \label{eqnn2}
\end{eqnarray}

We now join $W_1$ and $W_2$ into a $9$-state PFA $\mathcal{P} = (u, \gamma, v)$ where $u = \frac{1}{a+b+c} [u_1 | u_2]$, $v = [v_1 | v_2]$ and $\gamma(\ell) = \phi(\ell) \oplus \psi(\ell)$. Combining Eqns~(\ref{eqnn1}) and (\ref{eqnn2}) we see that
\begin{eqnarray}
u^T \gamma(h^xg^y) v & = & \frac{1}{a+b+c}\left(\frac{ax^2 + by}{4^{x+y}} + c(1-\frac{1}{4^{x+y}})\right) \nonumber \\
& = & \frac{1}{a+b+c}\left(c + \frac{ax^2+by-c}{4^{x+y}}\right) \label{binquad}
\end{eqnarray}
which equals $\frac{c}{a+b+c}$ if and only if $ax^2+by-c = 0$. Note that $\gamma(h)$ and $\gamma(g)$ commute by their structure since clearly $(A \otimes A) \oplus I$ and $I_4 \oplus A$ commute, giving $(A \otimes A) \oplus A$ in both cases (as a consequence of the mixed product properties of Lemma~\ref{kronprop}) and the rightmost vector of the matrix simply retains the row sum at $1$ for such a product since the matrices are stochastic. Both $\gamma(h)$ and $\gamma(g)$ are upper-triangular thus $\mathcal{P}$ is polynomially ambiguous.

\noindent {\bf Nonstrict Emptiness reduction.}
We now show the proof of the emptiness problem. We showed that the $\lambda$-reachability problem is NP-hard by deriving a PFA $\pfa$ over the binary alphabet $\{h, g\}$ such that $\pfa(h^xg^y)$ is given by Eqn.~\ref{binquad}. We note however that a non solution to $ax^2+by-c = 0$ can be positive or negative and thus we may be above or below the threshold $\frac{c}{a+b+c}$. This encoding thus cannot be used to show the NP-hardness of the \emph{emptiness} problem.

Instead, we can use a similar encoding of the quartic polynomial given by
 $(ax^2+by - c)^2 = a^2 x^4 + 2 a b x^2 y + b^2 y^2 + c^2 - 2 a c x^2 - 2 b c y$ with $a, b, c \in \N$. Note that we arranged the four positive terms first, followed by the two negative terms. Clearly $(ax^2+by - c)^2$ is nonnegative and equals zero if and only if $ax^2+by - c = 0$. We will derive a PFA $\pfa_2$ such that
 \[
 \pfa_2(h^x g^y) = \frac{1}{z}\left((2ac + 2bc) +  \frac{1}{16^{x+y}} (ax^2 + by + c )^2 \right),
 \]
 where $z = a^2+2ab+b^2+c^2+2ac+2bc$,
 with the property that $\pfa_2(h^x g^y) \geq \frac{2ac+2bc}{z}$ with equality if and only if $(ax^2+by - c)^2 = 0$ which is NP-hard to determine. To this end, we compute the following four matrices $\{H_+, G_+, H_-, G_-\}$, the idea being that $H_+$ and $G_+$ will be used to compute the positive four terms and $H_-$ and $G_-$ will compute the negative terms:
 \begin{eqnarray*}
 H_+ & = & \underbrace{(A \otimes A \otimes A \otimes A)}_{x^4} \oplus \underbrace{(A \otimes A \otimes I_2)}_{x^2y} \oplus \underbrace{(I_2 \otimes I_2)}_{y^2} \oplus \underbrace{1}_{1} \\ 
 G_+ & = & \underbrace{(I_2 \otimes I_2 \otimes I_2 \otimes I_2)}_{x^4} \oplus \underbrace{(I_2 \otimes I_2 \otimes A)}_{x^2y} \oplus \underbrace{(A \otimes A)}_{y^2} \oplus \underbrace{1}_{1} \\
  H_- & = & \underbrace{(A \otimes A)}_{x^2} \oplus \underbrace{I_2}_{y}\\
 G_- & = & \underbrace{(I_2 \otimes I_2)}_{x^2} \oplus \underbrace{A}_{y} 
 \end{eqnarray*}
 and by the mixed product property of Kronecker products of Lemma~\ref{kronprop}),
 \begin{eqnarray*}
 H_+^x G_+^y & = & (A^x \otimes A^x \otimes A^x \otimes A^x) \oplus (A^x \otimes A^x \otimes A^y) \oplus (A^y \otimes A^y) \oplus 1  \\
  H_-^x G_-^y & = & (A^x \otimes A^x) \oplus A^y
 \end{eqnarray*}
 
  Note that $H_+^x G_+^y$ and $H_-^x G_-^y$ each contain the positive and negative (respectively) term of $(ax^2+by - c)^2$, excluding the coefficients, e.g. $(H_+^x G_+^y)_{1, 16} = x^4$ and $(H_+^x G_+^y)_{17, 24} = x^2y$ etc. Note also that $H_+G_+ = G_+H_+$ and $H_-G_- = G_-H_-$ which also follows from the mixed product properties and thus matrices $\{H_+, G_+\}$ and $\{H_-, G_-\}$ commute.
 
  As before, we may now increase the dimension of each matrix $\{H_+, H_-, G_+, G_-\}$ by $1$ to ensure a common row sum (of $16$ in this case) by adding a new column on the right hand side of each matrix, and then divide each matrix by this common value to give $\{H'_+, H'_-, G'_+, G'_-\}$ so that each of these matrices is row stochastic. Matrices $\{H'_+, G'_+\}$ and $\{H'_-, G'_-\}$ still commute since this change only has an effect on the final column of the matrix.
  
  We now show how to handle each term of $(ax^2+by - c)^2$. We first handle the positive terms. We define $u_1 = (a^2, 0, \ldots, 0)^T \in \Q^{16}$, $u_2 = (2ab, 0, \ldots, 0)^T \in \Q^8$, $u_3 = (b^2, 0, 0, 0)^T \in \Q^{4}$ and $u_4 = c^2$ and then let $u_+ = [u_1 | u_2 | u_3 | u_4 | 0] \in \Q^{30}$. We let $v_1 = (0, \ldots, 0, 1)^T \in \Q^{16}$, $v_2 = (0, \ldots, 0, 1)^T \in \Q^{8}$, $v_3 = (0, 0, 0, 1)^T \in \Q^{4}$ and $v_4 = 1$, and let $v_+ = [v_1 | v_2 | v_3 | v_4 | 0] \in \Q^{30}$. We then see that 
  \begin{eqnarray} & &
  u_+^T (H'_+)^x (G'_+)^y v_+ \nonumber \\ &= & \frac{1}{16^{x+y}} \left( u_1^T (A^x \otimes A^x \otimes A^x \otimes A^x) v_1 + 
  u_2^T (A^x \otimes A^x \otimes A^y) v_2 +
  u_3^T (A^y \otimes A^y) v_3 +
  u_4^T v_4
  \right) \nonumber \\
   & = & \frac{1}{16^{x+y}}\left(a^2x^4 + 2abx^2y + b^2y^2 + c^2 \right) \label{expans1}
  \end{eqnarray}
  
  We next handle the negative terms, which is essentially accomplished by switching final and non-final states in the final state vectors to follow. Define $u_5 = (2ac, 0, 0, 0)^T \in \Q^4$ and $u_6 = (2bc, 0)^T \in \Q^2$ and let $u_- = [u_5 | u_6 | 0] \in \Q^{7}$. We let $v_5 = (0, 0, 0, 1)^T \in \Q^4$ and $v_6 = (0, 1)^T \in \Q^2$. Define $v_- = [v_5 | v_6 | 0] \in \Q^{7}$. We then see that
   \begin{eqnarray}
   & &
  u_-^T (H'_-)^x (G'_-)^y ({\bf 1} - v_-) \nonumber \\
   & = & (2ac + 2bc) - \frac{1}{16^{x+y}} \left( u_5^T (A^x \otimes A^x)  v_5 +
   u_6^T A^y v_6 + 0 \right) \nonumber
   \\
   & = & (2ac + 2bc) - \frac{1}{16^{x+y}} \left( 2acx^2 + 2bcy \right), \label{expans2}
    \end{eqnarray}
    where ${\bf 1} = (1, 1, \ldots, 1)^T \in \Q^{7}$. We used here the fact that $X{\bf 1} = {\bf 1}$ for a row stochastic matrix $X$.
  We finally define that $H = H'_+ \oplus H'_- \in \mathbb{Q}^{37 \times 37}$ and $G = G'_+ \oplus G'_- \in \mathbb{Q}^{37 \times 37}$, both of which are row stochastic and commute,  and let $u_\star = \frac{[u_+ | u_-]}{z} \in \mathbb{Q}^{37}$ and $v_\star = [v_+ | ({\bf 1} - v_-)] \in \mathbb{Q}^{37}$, with $z = a^2+2ab+b^2+c^2+2ac+2bc$ to normalise vector $u_\star$. We see then that $u_\star$ is a stochastic vector as required. We define the PFA $\pfa_2 = (u_\star, \{H, G\}, v_\star)$ and we can now compute that 
  \begin{eqnarray*} &  & \mathcal{P}_2(h^xg^y) = 
  u_\star^T H^x G^y v_\star\nonumber \\
  & = & u_\star^T (H_+'^x G_+'^y \oplus H_-'^x G_-'^y) v_\star \nonumber \\ & = &  \frac{1}{16^{x+y}} \left(\frac{[u_+ | u_-]}{z}^T \left(\begin{array}{c|c}
  \begin{array}{c|c} H_+^xG_+^y & * \nonumber \\ \hline {\bf 0} & 16^{x+y} \end{array} & 
  {\bf 0} \nonumber  \\ 
  \hline
 {\bf 0} & 
  \begin{array}{c|c} H_-^xG_-^y & * \\ \hline {\bf 0} & 16^{x+y} \end{array} \end{array}\right) [v_+ | ({\bf 1} - v_-)] \right) \nonumber \\
  \end{eqnarray*}

  \begin{eqnarray}
  & = & \frac{1}{z16^{x+y}} \left( u_+^TH_+^xG_+^yv_+ + u_-^TH_-^xG_-^y({\bf 1} - v_-)  \right)  \nonumber \\
  & = & \frac{1}{z} \left( u_+^T(H'_+)^x(G'_+)^yv_+ + u_-^T(H'_-)^x(G'_-)^y({\bf 1} - v_-)  \right) \nonumber \\
  & = & \frac{1}{z}\left( (2ac + 2bc) + \frac{1}{16^{x+y}}\left(a^2x^4 + 2abx^2y + b^2y^2 + c^2 \right) - \frac{1}{16^{x+y}} \left( 2acx^2 + 2bcy \right) \right) \nonumber \\ 
  & = & \frac{1}{z}\left((2ac + 2bc) +  \frac{1}{16^{x+y}} (ax^2 + by - c )^2 \right) \label{finalEq}
  \end{eqnarray}
where $*$ denote the column vectors used to ensure row sums of $16^{x+y}$ and ${\bf 0}$ denotes zero matrices of appropriate sizes. We also used Eqns~(\ref{expans1}) and (\ref{expans2}). 

Since $(ax^2 + by - c )^2$ is nonnegative, we see that $u_\star^T H^x G^y v_\star \geq \frac{2ac+2bc}{z}$ with equality if and only if $(ax^2 + by - c )^2 = 0$, which is NP-hard to determine. Therefore using cutpoint $\lambda = \frac{2ac+2bc}{z} \in \Q \cap [0, 1]$ means the (nonstrict) emptiness problem is NP-hard (i.e.~does there exist $x, y \in \N$ such that $u_\star^T H^x G^y v_\star \leq \lambda$ is NP-hard). As before, matrices $H$ and $G$ are upper-triangular and commute by their structure, and therefore the result holds.

\noindent {\bf Strict Emptiness reduction.}
Finally we show how to handle the strict emptiness problem. We proceed with a technique inspired by \cite{GO10}. By (\ref{finalEq}), if $P_2(h^xg^y) = u_\star^T H^x G^y v_\star \neq \frac{1}{z}(2ac + 2bc)$, then $
u_\star^T H^x G^y v_\star 
 \geq \frac{1}{z}\left((2ac + 2bc) +  \frac{1}{16^{x+y}} \right)
 $ 
 therefore $P_2(h^xg^y) \leq \frac{1}{z}(2ac + 2bc)$ if and only if $P_2(h^xg^y) < \frac{1}{z}\left((2ac + 2bc) +  \frac{1}{16^{x+y}} \right)$.
 
   Let us adapt $\pfa_2$ in the following way to create a new PFA $\pfa_3$. Note that $\pfa_2$ has $6$ initial states (by $u_\star$). We add three new states to $\pfa_3$, denoted $q_0, q_F$ and $q_*$. State $q_0$ is a new initial state of $\pfa_3$ which, for any input letter, has probability $\frac{1}{2\cdot 6}$ of moving to each of the $6$ initial states of $\pfa_2$ and probability $\frac{1}{2}$ to move to new state $q_F$. State $q_F$ is a new final state that remains in $q_F$ for any input letter with probability $1-\frac{1}{16z}$ and moves to a new non-accepting absorbing sink state $q_*$ with probability $\frac{1}{16z}$. We now see that for any $a \in \{h, g\}$:
\[ \pfa_3(aw) = \frac{1}{2}\pfa_2(w) + \frac{1}{2}\left(1-\frac{1}{16^{|w|}z^{|w|}}\right)
\]
If there exists $w_1 = h^xg^y$ with $x, y \geq 0$ such that  $\pfa_2(w_1) \leq \frac{1}{z}(2ac + 2bc)$ then $\pfa_2(w_1) = \frac{1}{z}(2ac + 2bc)$ and thus:
\[\pfa_3(aw_1) = \frac{1}{2}\left(\frac{1}{z}(2ac + 2bc)\right) + \frac{1}{2}\left(1-\frac{1}{16^{|w_1|}z^{|w_1|}}\right) < \frac{1}{2}\left(\frac{1}{z}(2ac + 2bc)+ 1\right).
\]
For any $w_2 = h^xg^y$ with $x, y \geq 0$ such that $\pfa_2(w_2) > \frac{1}{z}(2ac + 2bc)$ then $\pfa_2(w_2) \geq \frac{1}{z}(2ac + 2bc) + \frac{1}{16^{x+y}}$ by~(\ref{finalEq}). Thus: 
\[
\pfa_3(aw_2) \geq \frac{1}{2}\left(\frac{1}{z}(2ac + 2bc) + \frac{1}{16^{|w_2|}}\right)+ \frac{1}{2}\left(1-\frac{1}{16^{|w_2|}z^{|w_2|}}\right) > \frac{1}{2}\left(\frac{1}{z}(2ac + 2bc) + 1\right).
\]
Thus determining if there exists $w = h^xg^y$ such that $\pfa_3(w) < \frac{1}{2}\left(\frac{1}{z}(2ac + 2bc) + 1\right)$, i.e. the strict emptiness problem for $\pfa_3$ on cutpoint $\frac{1}{2}\left(\frac{1}{z}(2ac + 2bc) + 1\right)$, is NP-hard. The modifications to $\pfa_2$ retain polynomial ambiguity since $q_0$ and $q_F$ have no incoming (non self looping) edges and $q_*$ has no outgoing edges, therefore property EDA does not hold. Commutativity of the PFA is unaffected since $\mathcal{P}_{3}$ is identical to $\mathcal{P}_2$ except for adding three new states, behaving identically for both input letters. Note that $\pfa_3$ has $37+3 = 40$ states.
\end{proof}

\section{Conclusion}
We considered the emptiness and $\lambda$-reachability problems for unary and binary probabilistic finite automata of bounded ambiguity. Our main result is an EXPTIME algorithm to solve $\lambda$-reachability and emptiness for unary polynomially ambiguous PFA. We may note that our procedure generates a polynomial size certificate for these problems, however the verification step takes EXPTIME. We also note that the $\lambda$-reachability and emptiness problems are NP-hard for unary finitely ambiguous PFA. It would be interesting to close the gap between the upper and lower bounds for this problem. We also showed NP-hardness results for  $\lambda$-reachability ($9$ states) and strict ($40$ states) and nonstrict ($37$ states) emptiness for binary polynomially ambiguous PFA with commutative transition matrices but a PSPACE-hardness result in this setting appears challenging. A further avenue of research is to reduce the number of states for which NP-hardness holds.
\bibliographystyle{alphaurl}
\bibliography{refs}

% ======================================

\end{document}